\newcommand{\mycase}[1]{\left\{#1\right.}
\newcommand{\myround}[1]{\left(#1\right)}
\newcommand{\MYfooter}{\smash{
\hfil\parbox[t][\height][t]{\textwidth}{}\hfil\hbox{}}}
\def\@oddhead{\mbox{}$\date{\today}$ \rightmark \hfil }%
\def\@oddfoot{\MYfooter}%
\def\ps@IEEEtitlepagestyle{
  \def\@oddfoot{\mycopyrightnotice}
  \def\@evenfoot{}
}
\def\mycopyrightnotice{
  {\footnotesize
  \begin{minipage}{\textwidth}
  \flushleft
  978-1-5386-6378-3/18/\$31.00~\copyright2020 IEEE
  \end{minipage}
  }
}
\newtheorem{proposition}{Proposition}
\begin{document}\setstretch{1.5}

\title{Broadcast Approach for the Information Bottleneck Channel\thanks{The work of S. Shamai has been supported by the European Union's Horizon 2020 Research And Innovation Programme, grant agreement no. 694630.}}

\author{Avi Steiner and Shlomo Shamai (Shitz)\\
		\textit{Technion---IIT, Haifa 3200003, Israel \\
		steiner.avi@gmail.com, sshlomo@ee.technion.ac.il}
}

\maketitle

\begin{center}
	\today
\end{center}

\begin{abstract}
%\boldmath
This work considers a layered coding approach for efficient transmission of data over a wireless block fading channel without transmitter channel state information (CSI), which is connected to a limited capacity reliable link, known as the bottleneck channel. Two main approaches are considered, the first is an oblivious approach, where the sampled noisy observations are compressed and transmitted over the bottleneck channel without having any knowledge of the original information codebook. The second approach is a non-oblivious decode-forward (DF) relay where the sampled noisy data is decoded, and whatever is successfully decoded is reliably transmitted over the bottleneck channel. The bottleneck channel from relay to destination has a fixed capacity $C$. We examine also the case where the channel capacity can dynamically change due to variable loads on the backhaul link. The broadcast approach is analyzed for cases that only the relay knows the available capacity for next block, and for the case that neither source nor relay know the capacity per block, only its capacity distribution. Fortunately, it is possible to analytically describe in closed form expressions, the optimal continuous layering power distribution which maximizes the average achievable rate. Numerical results demonstrate the achievable broadcasting rates.  
\end{abstract}

\section{Introduction and Preliminaries}
% no \IEEEPARstart

Block fading channel model is commonly used for wireless communications,
dominating the cases when mobile endpoints move slow relatively to the block coherence time.
In slowly varying fading channels the fading realization is fixed throughout a transmission block, giving rise to the block fading notion. By this model, the receiver can easily learn the channel characterization over the block,
thus we can assume perfect Channel State Information (CSI) only at the receiver side.
In most practical cases, there is no feedback channel to the transmitter, resulting
in its total unawareness of the instantaneous channel, yet it knows the channel statistics.

Consider the problem of transmitting over a block fading channel to a relay node,
which has to forward the received signal to a destination over a reliable link with a fixed capacity $C$, see Figure \ref{fig:model} for the schematic channel model. For Gaussian channels this is known as the bottleneck channel \cite{Tishby2004}. An overview of bottleneck problems with theoretical and practical application is presented in \cite{Zaidi20}. This channel model is also applicable for the evolving next generation 5G/6G cellular networks, where the communication with the promising architecture of the Cloud radio access network (C-RAN) introduces stringent requirements on the fronthaul capacity and latency \cite{5G_7456186}, \cite{5G_8753420}, and many other timely contributions. 

When transmitting over a block fading channel with receive CSI only, a broadcast approach may be considered on the transmission to maximize the average achievable rate. The broadcast approach, which is essentially a variable-to-fixed channel coding \cite{Verdu10variable-ratechannel}, was studied in \cite{ShitzSteiner03} for the MIMO fading channel with receiver CSI only. A finite capacity link to base-station subject to random fluctuations was studied in \cite{ROY_6613623} for the case of two users connecting to the same base-station. Another 
related overview of matrix monotonic optimization is studied in \cite{DBLP:journals/corr/abs-1810-11244}. Broadcast methods for the diamond channel, which is the two parallel relays channel, were studied in \cite{6642081}, \cite{8811604}. Broadcast approach for bottleneck channel with a known static bottleneck capacity channel is studied in \cite{SteinerShamai2019} and under bottleneck capacity uncertainty \cite{SteinerShamai2020}. 

In the classical Gaussian bottleneck problem, depicted in Figure \ref{fig:model}, define the random variable triplet $X-Y-Z$ forming a Markov chain, and related according to 
\begin{equation}\label{eqChannelModel}
Y = h\cdot X + N,
\end{equation}
where $X$ and $N$ are independent random variables,
with $N\sim N(0,1)$ being real Gaussian with a unit variance, $h$ is the channel fading fixed per codeword, and the fading gain $s=|h|^2$ with a signal to noise ratio (SNR) is $SNR=P\cdot s$, where the gain $s=1$ for a non-fading Gaussian channel, and $P$ is the transmission power $E[X^2]=P$. In the broadcast approach \cite{ShitzSteiner03}, discussed later, the channel model includes fading, where the fading gain $s$ has a known distribution. The bottleneck channel output $Z$ is a compressed version of $Y$ adhering to a limited capacity of the bottleneck channel $C$. It is of interest to maximize
\begin{equation}\label{maxEq}
\max\limits_{P(X), P(Z|Y)~s.t. I(Y;Z)\leq C} ~ I(X;Z)
\end{equation}
Evidently if $X$ is Gaussian it is well known by Tishby et al \cite{Tishby2004}, and \cite{tishby99information},
then also $Y-Z$ is a Gaussian channel, and the maximization result of \eqref{maxEq}
\begin{equation}\label{maxCompressionRate}
C_{Obliv} = I(X;Z) = \frac{1}{2}\log(1+P|h|^2) - \frac{1}{2}\log(1+P|h|^2\cdot \exp(-2C)),
\end{equation}
which follows immediately from the rate distortion approach,
that is the relay output can be represented by quantization of its input $Y$, 
\begin{equation}\label{eqRD}
 Z=Y+M,
\end{equation}
where the variance of $Y$ is $P|h|^2+1$ as determined by the channel model \eqref{eqChannelModel}, and the variance of $M$ representing the quantization noise, which
is determined by requiring $I(Z;Y) = C$, that is
\begin{equation}\label{eqVar}
E[M^2] = \frac{P|h|^2+1}{exp(2C)-1}
\end{equation}
The bottleneck gives reliable information rate
that can be transmitted from $X$ to $Z$, when the relay $Y$
operates in an oblivious way (it has no knowledge about
the codebook and can not decode the message). For a non-oblivious decode-forward (DF) approach the result is immediate, as the relay may decode the data, and then transmit over the limited bandwidth channel $Y-Z$ at rate $C$. Therefore the achievable transmission rate is the minimum of the two channels capacity, 
\begin{equation}
C_{DF}=\min \{ \frac{1}{2}log(1+P|h|^2),C \}.
\end{equation}
Another common setting in cellular uplink is a variable availability of capacity on the backhaul. This may be the result of variable loads on the network over time. Traffic congestion of internet data may lead to changing availability levels of the backhaul \cite{ROY_6613623}. On the bottleneck channel this means that the relay-destination link capacity $C$ is a random variable. It may be assumed that the transmitter is aware of the average capacity, and its distribution, however like in case of the wireless fading channel, the capacity variability dynamics may not allow the time needed for a useful feedback to the transmitter. When relay is fully aware of the current bottleneck currently available capacity for the received codeword, it can match the transmission data rate. However, when relay is not aware of the available capacity per codeword, it has to perform successive refinement source coding \cite{Tian08} matched to the capacity distribution. This problem is analyzed in Section \ref{UncertaintySec}.

\subsection{Broadcast Approach Preliminaries}\label{sub_broadcast}
Consider a transmitted signal $X$ composed of multi-layer coded information, in a continuum of layers, such that each code layer receives an infinitesimal power $\rho(u)du$. The broadcast approach was introduced in detail in \cite{ShitzSteiner03}. We
briefly review the principles of the broadcast approach. Consider the channel model in (\ref{eqChannelModel}), where the channel gain $u$ is a block fading random variable, known at the receiver only. Transmitter is only aware of its distribution, but not its realization per codeword. The incremental rate as function of power allocation, for a Gaussian fading channel, is \cite{ShitzSteiner03}
\begin{eqnarray}\label{S_4}
dR(u) = \frac{1}{2}\cdot \frac{\rho(u)udu}{1+I(u)u}
\end{eqnarray}
where $I(u)$ is the residual interference function, such that
$I(0)=P$, and $\rho(u)=-I'(u)$ is the power allocation density
function. The total allocated rate as function of $s$ is thus
\begin{eqnarray}\label{S_4_1}
R(s) = \frac{1}{2}\cdot \int\limits_0^s\frac{\rho(u)udu}{1+I(u)u}
\end{eqnarray}
The maximal average rate is expressed as follows
\begin{eqnarray}\label{S_5}
R_{bs,avg} = \max\limits_{I(u)} \frac{1}{2} \int\limits_0^{\infty} du
(1-F_s(u)) \frac{\rho(u)u}{1+I(u)u}
\end{eqnarray}
where $F_s(u)$ is the cumulative distribution function (CDF) of
the fading gain random variable. It may be shown, \cite{ShitzSteiner03},
that the optimal power allocation is given by
\begin{eqnarray}\label{S_7}
I_{opt}(u)=\mycase{
	\begin{array}{ll}
	P & u< u_0\\
	\frac{1-F_s(u)-u\cdot f_s(u)}{u^2f_s(u)} & u_0\leq u\leq u_1\\
	0 & u> u_1
	\end{array}}
\end{eqnarray}
where $u_0$ and $u_1$ are obtained from the boundary conditions
$I_{opt}(u_0)=P$, and $I_{opt}(u_1)=0$, respectively.

Interestingly, the optimal allocated rate can be expressed in closed form by substituting the optimal power allocation (\ref{S_7}) into the cumulative layering rate in (\ref{S_4_1}), by
\begin{eqnarray}\label{S_4_2}
R_{opt}(s) =\mycase{
	\begin{array}{ll}
	0 & s< u_0\\
	\log(s/u_0)+\frac{1}{2} \log\left( \frac{f_s(s)}{f_s(u_0)} \right) & u_0\leq s\leq u_1\\
	\log(u_1/u_0)+\frac{1}{2} \log\left( \frac{f_s(u_1)}{f_s(u_0)} \right) & s > u_1
	\end{array}}
\end{eqnarray}

\begin{figure}[]%[width=0.7]
	\centering
	\includegraphics[width=0.8\textwidth]{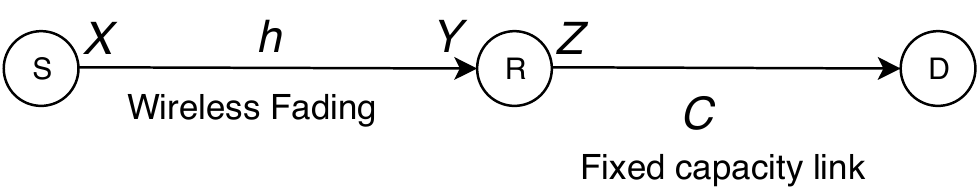}
	\caption{Information bottleneck fading channel system model block diagram.}
	\label{fig:model}
\end{figure}

\section{Broadcast Approach for Fading Information Bottleneck Channel}

Consider a fading channel on the wireless link to $Y$, where $s=|h|^2$ is the block fading gain with a unit variance. Under a slowly fading channel, the random variable gain $s$ changes independently from codeword to codeword, and remains fixed over the codeword. The channel model for $Z$ can be expressed by its block fading gain, under an oblivious approach
\begin{equation}
Z = sqrt(FPR_{eq}) X + N,
\end{equation}
where $N$ is a unit variance Gaussian noise, and the equivalent fading power gain is
\begin{equation}\label{snreq}
FPR_{eq} = \frac{s(1-\exp(-2C))}{1+s\cdot P\cdot \exp(-2C)}
\end{equation}
which is directly obtained from \eqref{eqVar}. It may be observed that $FPR_{eq}$ is finite for $s\geq 0$, and at the limit of $s\rightarrow\infty$ becomes
\begin{equation}\label{snreq_limit}
\lim\limits_{s\rightarrow\infty} FPR_{eq} = (\exp(2C)-1)/P
\end{equation}

\subsection{Oblivious Bottleneck Channel Approach}
On the oblivious approach, the received codeword in $Y$ is not decoded, and there is no information of its codebook, therefore, the compression of $Y$ into $Z$ is performed accounting for the distribution of $Y$ only. Under a fading channel model, the ergodic capacity of the bottleneck fading channel is determined by 
\begin{eqnarray}\label{erg_Obliv}
C_{Obliv,Erg} &=& E_s[\frac{1}{2}\log(1+P\cdot FPR_{eq})] \nonumber \\
 &=&   E_s\left[ \frac{1}{2} \log\left( 1+ \frac{s\cdot P\cdot (1-\exp(-2C))}{1+s\cdot P\cdot \exp(-2C)} \right) \right] \label{eqn111}
\end{eqnarray}
\subsubsection{Single Layer Coding}
Using a single layer coding approach for the fading channel, the achievable average rate depends on the allocated rate, and the fading distribution in the following way. In the oblivious communication scheme, let the transmitter allocate a rate $R_{1,obliv}$ as function of a fading threshold parameter $s_{th}$. Then, the decoding of the noisy compressed signal $Z$ is possible for fading gains  $s\geq s_{th}$, and the allocated rate corresponds to 
\begin{equation}\label{eq1}
R_{1,obliv} = \frac{1}{2}log(1+FPR_{eq}(s_{th})P)  = \frac{1}{2}log\left(  1+ \frac{s_{th}(1-\exp(-2C))P}{1+s_{th}\cdot P\cdot \exp(-2C)}\right)
\end{equation}
Since $FPR_{eq}(s_{th})$ is a monotonic function of $s_{th}$, the rate $R_{1,obliv}$ can be achieved for any fading gain $s\geq s_{th}$, and therefore the average rate with a single layer is 
\begin{equation}\label{eq2}
R_{1,obliv,avg} = (1-F_s(s_{th})) \frac{1}{2}log\left(  1+ \frac{s_{th}(1-\exp(-2C))P}{1+s_{th}\cdot P\cdot \exp(-2C)}\right)
\end{equation}
where $F_s(x)$ is the CDF of the channel fading gain $s$, and the outage capacity is 
\begin{equation}\label{eq3}
R_{1,obliv,avg} = \max\limits_{s_{th}\geq 0} ~ (1-F_s(s_{th})) \frac{1}{2}log\left(  1+ \frac{s_{th}(1-\exp(-2C))P}{1+s_{th}\cdot P\cdot \exp(-2C)}\right)
\end{equation}
\subsubsection{Continuous Broadcast Approach}

We derive here the continuous broadcasting approach, where the transmitted signal $X$ is multi-layer coded, in a continuum of layers, such that each code layer receives an infinitesimal power $\rho(u)du$. The broadcast approach
was introduced in detail in \cite{ShitzSteiner03}. The channel model here can be expressed in its equivalent form, where the fading gain is an equivalent fading gain $\nu=FPR_{eq}$ from (\ref{snreq}), which also depends on the channel fading gain $s$ distribution, and on the bottleneck channel capacity, as well as the transmission power. In this oblivious bottleneck channel, the broadcast approach is optimized for a fading distribution $F_\nu(u)$ of (\ref{snreq}). The derivation of the optimal power distribution is then directly derived as described in Section \ref{sub_broadcast}. Clearly for high bottleneck channel capacity $C\rightarrow\infty$, then $FPR_{eq}\rightarrow s$.

\subsection{A Non-Oblivious Decode Forward Bottleneck Channel Approach}
On the non-oblivious DF approach, the received codeword in $Y$ can be decoded, and then all the decoded data up to rate $C$ can be reliably conveyed to $Z$. Under a fading channel model, the non-oblivious ergodic capacity of the bottleneck fading channel $C_{DF,Erg}$ provides an ergodic upper bound which is not achievable under a block fading model, where each codeword is transmitted over a relatively short duration with a single channel realization, without capturing the full distribution of $s$. For the slowly block fading channel it is beneficial to transmit a multi-layered codeword when transmitter has no channel state information (CSI). Under this model, non-oblivious ergodic capacity of the bottleneck fading channel is formulated as
\begin{equation}\label{eq_erg}
C_{DF,Erg} = E_s[\min\{\frac{1}{2}log(1+sP), C\}].
\end{equation}
which corresponds to a block fading model, where the transmission and decoding are done over a single fading realization, due to a slow fading nature of the channel.
\subsubsection{Single Layer Coding}
Using a single layer coding approach for the fading channel, the achievable average rate depends on the allocated rate, the bottleneck channel capacity $C$ and the fading distribution in the following way. In the non-oblivious communication scheme, let the transmitter allocate a rate $R_{1,DF}$ as function of a fading threshold parameter $s_{th}$. Then, the decoding of the noisy compressed signal $Z$ is possible for fading gains  $s\geq s_{th}$, and the allocated rate corresponds to 
\begin{equation}\label{eq1_SL}
R_{1,DF} = \frac{1}{2}log(1+ s_{th}P)
\end{equation}
where the rate $R_{1,DF}$ is selected such that $R_{1,DF}\leq C$, and can be achieved for any fading gain $s\geq s_{th}$, and conveyed reliably over the bottleneck channel after decoding. Therefore the average rate with a single layer is 
\begin{equation}\label{eq2_SL}
R_{1,DF,avg}(s_{th}) = (1-F_s(s_{th}))\cdot R_{1,DF} = (1-F_s(s_{th}))\cdot \frac{1}{2}log(1+ s_{th}P) 
\end{equation}
and hence the outage capacity of the non-oblivious channel is given by
\begin{equation}\label{eq3_SL}
R_{1,DF,avg} = \max\limits_{s_{th}\geq 0} (1-F_s(s_{th}))\cdot \min( C, \frac{1}{2}log(1+ s_{th}P) )
\end{equation}

\subsubsection{Continuous Broadcast Approach}

We derive here the continuous broadcasting approach for the non-oblivious DF approach, where the transmitted signal $X$ is multi-layer coded, in a continuum of layers. The received signal $Y$ is decoded layer-by-layer in a successive decoding manner. All the successfully decoded layers with a total rate up to the bottleneck channel capacity $C$ can be reliably conveyed over the bottleneck channel. The broadcast approach optimization goal is to maximize the average transmitted rate over the bottleneck channel in this block fading channel model. We formulate here the optimization of power density distribution function $\rho_{opt}(u)$ so that average transmission rate is maximized under the bottleneck channel capacity constraint.
%[Maximal average broadcasting rate $R_{bs,Non-Obliv}$]

\begin{proposition}\label{thm:RbsNonObliv}
	For the non-oblivious block fading bottleneck channel, the total \emph{expected average achievable rate of the broadcast approach} is obtained by the following residual power distribution function
\begin{eqnarray}\label{prop_1}
I_{opt}(u) = \arg \max\limits_{I(u)} \frac{1}{2} \int\limits_0^{\infty} du
(1-F_s(u)) \frac{\rho(u)u}{1+I(u)u} , ~~ s.t. ~ \int\limits_0^{\infty} du \frac{\rho(u)u}{1+I(u)u} \leq C
\end{eqnarray}
where $F_s(u)$ is the CDF of the fading gain random variable, and $C$ is the bottleneck channel capacity. The optimal power allocation $I_{opt}(u)$ is given by
\begin{eqnarray}\label{prop_2}
I_{opt}(u)=\mycase{
	\begin{array}{ll}
	P & u< u_0\\
	\frac{1-F_s(u)+\lambda_{opt}-u\cdot f_s(u)}{u^2f_s(u)} & u_0\leq u\leq u_1\\
	0 & u> u_1
	\end{array}}
\end{eqnarray}
where $\lambda_{opt}\geq 0$ is a Lagrange multiplier specified by
\begin{eqnarray}\label{prop_3}
\lambda_{opt} = -u_1\cdot f_s(u_1) -1+F_s(u_1)
\end{eqnarray}
and for any $\lambda_{opt}>0$,  
\begin{eqnarray}\label{prop_4}
u_1^2 \cdot f_s(u_1) = \exp(2C)\cdot u_0^2 \cdot f_s(u_0)
\end{eqnarray}	
\end{proposition}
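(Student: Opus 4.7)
The plan is to treat this as a constrained calculus of variations problem. First, I introduce a Lagrange multiplier $\lambda \geq 0$ for the bottleneck capacity inequality and consider
\begin{equation*}
\mathcal{L}[I] = \int_0^{\infty}\left[\frac{1}{2}(1-F_s(u))-\lambda\right]\frac{\rho(u)\,u}{1+I(u)u}\,du + \lambda C,
\end{equation*}
which has exactly the same functional form as the unconstrained broadcast-approach objective (\ref{S_5}), only with the effective weight $\frac{1}{2}(1-F_s(u))$ replaced by $w(u):=\frac{1}{2}(1-F_s(u))-\lambda$. Complementary slackness will enforce $\lambda=0$ whenever the constraint is inactive and $\lambda>0$ with equality in the constraint otherwise.

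Applying the Euler--Lagrange machinery from (\ref{S_5})--(\ref{S_7}) to this modified weight, the stationarity condition on the interior region reduces to the algebraic identity $w(u)/(1+I(u)u) + u\,w'(u)=0$; solving it for $I(u)$ yields the middle branch of (\ref{prop_2}). The outer branches arise, as in the unconstrained case, from the physical bounds $0\le I(u)\le P$: in the region where the Euler--Lagrange expression would exceed $P$ we clamp to $I=P$ (defining $u_0$), and where it would become negative we clamp to $I=0$ (defining $u_1$), with the critical points determined by continuity $I_{opt}(u_0)=P$ and $I_{opt}(u_1)=0$.

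Two more relations are then needed to pin down $u_0$, $u_1$ and $\lambda_{opt}$. Equation (\ref{prop_3}) follows directly from $I_{opt}(u_1)=0$, solved for $\lambda_{opt}$. For equation (\ref{prop_4}), I would substitute the middle branch into the identity $1+I(u)u=(1-F_s(u)+\lambda_{opt})/(u\,f_s(u))$ and simplify the rate differential $dR=\frac{1}{2}\rho(u)u/(1+I(u)u)\,du$. A direct computation, mirroring the one leading to (\ref{S_4_2}), is expected to collapse $dR$ into the exact differential $d[\log u+\frac{1}{2}\log f_s(u)]$. Integrating from $u_0$ to $u_1$ and imposing the active capacity constraint then produces $u_1^2 f_s(u_1)=\exp(2C)\cdot u_0^2 f_s(u_0)$.

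The main obstacle is the rate-integral computation: one has to verify that even with the Lagrangian shift by $\lambda$, the same telescoping identity that made (\ref{S_4_2}) elementary still holds, so that the cumulative rate retains its clean $\log$-form and the capacity constraint turns into the product relation (\ref{prop_4}). A secondary subtlety is justifying optimality (not just stationarity) of the three-region solution, which can be handled by concavity of the Lagrangian in the admissible class of non-increasing residual-power functions $I$ together with a KKT argument that fixes the sign of $\lambda_{opt}$ via complementary slackness.
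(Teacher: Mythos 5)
You take essentially the same route as the paper's Appendix~A: a constrained calculus-of-variations problem with a scalar Lagrange multiplier $\lambda$, Euler--Lagrange stationarity on the interior region, clamping by the physical bounds $0\le I\le P$ to get the three-region form, $\lambda_{opt}$ from $I_{opt}(u_1)=0$, and the active rate constraint to produce (\ref{prop_4}). Your shortcut of casting the Lagrangian as the unconstrained broadcast objective (\ref{S_5}) with the shifted weight $w(u)=\tfrac{1}{2}(1-F_s(u))-\lambda$ is a cleaner way to reuse (\ref{S_7}) instead of regrinding the partial derivatives, and your complementary-slackness and concavity remarks address optimality in a way the paper leaves implicit.

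One step does not go through as asserted: you claim without computation that solving $w(u)/(1+I(u)u)+u\,w'(u)=0$ "yields the middle branch of (\ref{prop_2})." It does not, as written. With $w'(u)=-\tfrac{1}{2}f_s(u)$, that identity gives
\[
I(u)=\frac{1-F_s(u)-2\lambda-u f_s(u)}{u^2 f_s(u)},
\]
carrying a $-2\lambda$ offset rather than the $+\lambda$ appearing in (\ref{prop_2}). This is not a harmless relabeling: once you impose the KKT sign condition $\lambda\ge 0$, your offset pushes the interior branch (and hence $u_1$) downward, which is the physically correct direction for an active $\le C$ constraint, whereas the $+\lambda$ in the proposition shifts it the other way (and, consistently, the expression in (\ref{prop_3}) would then be nonpositive). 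Fortunately your remaining steps are insensitive to the offset: the identity $1+I(u)u=(1-F_s(u)+\mathrm{const})/(u f_s(u))$ holds for any constant shift, so $dR=d\bigl[\log u+\tfrac{1}{2}\log f_s(u)\bigr]$ and thus (\ref{prop_4}) are recovered regardless. But you should present the offset you actually derive and explicitly reconcile (or flag) the sign/scale mismatch with (\ref{prop_2})--(\ref{prop_3}) rather than stating agreement you have not checked.
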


\begin{proof}
	The proof is based on solving a constrained minimization problem using variational calculus tool. See Appendix A, in Section \ref{Appendix}.
\end{proof}

\section{Bottleneck Capacity Uncertainty}\label{UncertaintySec}
A common case in cellular uplink is a variable availability of capacity on the backhaul. This may be the result of variable loads on the network over time. Traffic congestion of internet data may lead to changing availability levels of the backhaul \cite{ROY_6613623}. On the bottleneck channel this means that the relay-destination link capacity $C$ is a random variable. It may be assumed that the transmitter is aware of the average capacity, and its distribution, however like in case of the wireless fading channel, the capacity variability dynamics may not allow time for feedback to the transmitter. 
The following subsection considers the case that relay is fully aware of the current bottleneck available capacity for each received codeword.
 
Consider a bottleneck channel capacity discrete random variable $C_b$, which may admit to $N$ capacity values $\{C_i\}_{i=1}^{N}$, such that $C_1\leq C_2 \leq \cdots \leq C_N$ with corresponding probabilities $\{p_{b,i}\}_{i=1}^{N}$, such that $p_{b,i}\geq 0$ and $\sum_{i=1}^{N}p_{b,i}=1$. The average capacity of the bottleneck channel is 
\begin{eqnarray}\label{avg_capacity_1}
C_{avg} = \sum_{i=1}^{N}p_{b,i}C_i
\end{eqnarray}	

\subsection{Oblivious Bottleneck Channel Approach}
On the oblivious approach, the received codeword in $Y$ is not decoded, and there is no information of its codebook, therefore, the compression of $Y$ into $Z$ is performed accounting for the bottleneck channel capacity $C_i$, which is available to the relay, and the distribution of $Y$ only. Under a fading channel model, the ergodic capacity of the bottleneck fading channel under the uncertainty bottleneck capacity is determined by 
\begin{eqnarray}\label{erg_OblivMult}
C_{UC,Obliv,Erg} &=& E_{s,C_b}[\frac{1}{2}\log(1+P\cdot FPR_{eq}(Cb))] \nonumber \\
&=&  \frac{1}{2} \sum_{i=1}^{N}p_{b,i}\cdot E_s\left[  \log\left( 1+ \frac{s\cdot P\cdot (1-\exp(-2C_i))}{1+s\cdot P\cdot \exp(-2C_i)} \right) \right] \label{eqn111_mult}
\end{eqnarray}
which directly follows from (\ref{erg_Obliv}) and (\ref{avg_capacity_1}).  
\subsubsection{Single Layer Coding}
Using a single layer coding approach for the fading channel, the achievable average rate depends on the allocated rate, the bottleneck capacity and the fading distribution in the following way. Let the transmitter allocate a rate $R_{1,obliv}$ as function of a fading threshold parameter $s_{th}$, which is different from the fading threshold used in (\ref{eq1}), 
\begin{equation}\label{eq1Mult}
R_{1,obliv} = \frac{1}{2}log(1+s_{th}P).
\end{equation}
For a given availability of the bottleneck channel, denoted by capacity realization $C_i$, the required minimal equivalent FPR for successful decoding is $FPR_{eq}(C_i)\geq s_{th}$, i.e. decoding will succeed for fading gain $s\geq0$ such that
\begin{equation}\label{eq2Mult}
s_{th} \leq \frac{s\cdot(1-\exp(-2C_i))}{1+s\cdot P\cdot \exp(-2C_i)}
\end{equation}
where $C_i$ is the bottleneck capacity realization for a give codeword that was transmitted over a fading channel with a fading gain realization $s$. Therefore the average rate with a variable bottleneck capacity and a single layer transmission is 
\begin{equation}\label{eq3Mult}
R_{UC,1}(s_{th}) = \frac{1}{2} \sum_{i=1}^{N}p_{b,i}\cdot \left( 1-F_s\left(\frac{s_{th}}{1-\exp(-2C_i)(1+P\cdot s_{th})} \right) \right) log\left(  1+s_{th}\cdot P \right)
\end{equation}
where $F_s(x)$ is the fading gain CDF. The outage capacity is then 
\begin{equation}\label{eq4Mult}
R_{UC,1,Obliv,avg} = \max\limits_{s_{th}\geq 0} ~ R_{UC,1}(s_{th})
\end{equation}

\subsubsection{Continuous Broadcast Approach}

We derive here the continuous broadcasting approach, where the transmitted signal $X$ is multi-layer coded, in a continuum of layers, such that each code layer receives an infinitesimal power allocation corresponding to an equivalent fading gain parameter. Since the transmitter is not aware of the bottleneck capacity per codeword, but only its distribution, and average value, the following optimization flow is used for the continuous broadcast approach optimization. 

The combined equivalent channel viewed by the transmitter  
\begin{equation}\label{snreqCV}
FPR_{eq}(s,C_b) = \frac{s(1-\exp(-2C_b))}{1+s\cdot P\cdot \exp(-2C_b)}, ~~ s=|h|^2,
\end{equation}
Continuous broadcast approach is optimized for a fading distribution $F_\mu(u)$ where $\mu=FPR_{eq}(s,C_b)$ (\ref{snreqCV}) : equivalent channel gain depending on the fading gain realization $s$, and bottleneck channel capacity $C_b$ available per codeword. The cdf of this fading gain is
\begin{equation}\label{snreqvpdf}
F_\mu(u)=\sum_{i=1}^{N}p_{b,i} F_s \myround{\frac{u}{1-(1+Pu)\exp(-2C_i)}}
\end{equation}
The main result here is expressed on the following proposition
\begin{proposition}	
	The power distribution, which maximizes the
	expected rate over the oblivious bottleneck channel is
	\begin{eqnarray}\label{eq:general_Ix1}
	I(x) = \mycase{
		\begin{array}{cl}
		\frac{1-F_\mu(x)-x\cdot f_\mu(x)}{x^2f_\mu(x)} & ,~x_0\leq x\leq x_1 \\
		0 & ,~else
		\end{array}}
	\end{eqnarray}
	where $x_0$ is determined by $I(x_0)=P$, and $x_1$ by
	$I(x_1)=0$. And the broadcasting rate is expressed as function of the $FPR_{eq}$ distribution $F_\mu(u)$
	\begin{eqnarray}\label{eq5Mult_BS}
	R_{opt}(s) =\mycase{
		\begin{array}{ll}
		0 & s< x_0\\
		\log(s/x_0)+\frac{1}{2} \log\left( \frac{f_\mu(s)}{f_\mu(x_0)} \right) & x_0\leq s\leq x_1\\
		\log(x_1/x_0)+\frac{1}{2} \log\left( \frac{f_\mu(x_1)}{f_\mu(x_0)} \right) & s > x_1
		\end{array}}
	\end{eqnarray}
	
\end{proposition}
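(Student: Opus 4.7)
The plan is to reduce this uncertainty setting to an ordinary Gaussian fading channel for which the broadcast-approach machinery of Section~\ref{sub_broadcast} applies verbatim. From the transmitter's perspective, neither the wireless fading gain $s$ nor the backhaul capacity $C_b$ is known per codeword, only their joint distribution. For a given pair $(s,C_i)$, the oblivious compression at the relay is Gaussian quantization as in \eqref{eqRD}–\eqref{eqVar}, so the chain $X\to Y\to Z$ collapses into a scalar Gaussian link $Z=\sqrt{FPR_{eq}(s,C_i)P}\,\tilde X + \tilde N$ with $\tilde N$ unit variance and effective power gain $FPR_{eq}(s,C_i)$ given in \eqref{snreqCV}. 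Thus the cascade is exactly a block-fading Gaussian channel with fading variable $\mu=FPR_{eq}(s,C_b)$, and the transmitter only needs the marginal $F_\mu(u)$.

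First I would derive the CDF $F_\mu$ stated in \eqref{snreqvpdf}. For each fixed $C_i$ the map $s\mapsto FPR_{eq}(s,C_i)$ is strictly increasing, so inverting it gives the threshold $s(u,C_i)=u/(1-(1+Pu)\exp(-2C_i))$; conditioning on the event $\{C_b=C_i\}$ and summing over $i$ with weights $p_{b,i}$ produces \eqref{snreqvpdf}. This step is routine and amounts to the law of total probability applied to a monotone transformation.

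Once the equivalent fading channel with CDF $F_\mu$ is in hand, the proposition follows by invoking the broadcast approach preliminaries directly: the optimization in \eqref{S_5} with $F_s$ replaced by $F_\mu$ yields the Euler–Lagrange equation whose solution is \eqref{S_7} with $F_\mu$ in place of $F_s$, giving the claimed $I(x)$ in \eqref{eq:general_Ix1}. The thresholds $x_0,x_1$ come from the boundary conditions $I(x_0)=P$ and $I(x_1)=0$ as in Section~\ref{sub_broadcast}. Substituting this optimal $I(x)$ into \eqref{S_4_1} and integrating produces the closed-form rate \eqref{eq5Mult_BS}, again mirroring \eqref{S_4_2} with $F_\mu,f_\mu$ replacing $F_s,f_s$.

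The only nontrivial conceptual point, and the one I would emphasize in the writeup, is justifying the equivalent-channel reduction: since the relay is oblivious, its compression rule depends solely on the marginal distribution of $Y$ (which is Gaussian given $s$) and on the realized $C_i$, so the quantization noise $M$ in \eqref{eqRD} is independent of the codebook and combines additively with the channel noise. This is what allows $\mu$ to be treated as an ordinary fading gain and permits direct re-use of \eqref{S_7} and \eqref{S_4_2}; no re-derivation of the variational optimum is needed, because neither the form of the Lagrangian nor the monotonicity arguments in \cite{ShitzSteiner03} depend on any specific structure of the fading CDF beyond it being a valid distribution on $[0,\infty)$.
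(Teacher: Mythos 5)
Your proof is correct and follows essentially the same route as the paper's: reduce the oblivious bottleneck cascade to an equivalent scalar fading channel with power gain $\mu = FPR_{eq}(s,C_b)$, then invoke the broadcast-approach optimum \eqref{S_7} and its closed-form rate \eqref{S_4_2} verbatim with $F_\mu, f_\mu$ in place of $F_s, f_s$. The extra detail you supply—verifying monotonicity of $s\mapsto FPR_{eq}(s,C_i)$ to justify the inversion behind \eqref{snreqvpdf}, and explicitly noting that the quantization noise $M$ is codebook-independent so it adds to the channel noise—is a useful elaboration of what the paper states in one line, but it is not a different argument.
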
	

\begin{proof}
	The proof is a direct derivation of the broadcast approach optimization \cite{ShitzSteiner03} for the power distribution under an equivalent channel model that includes the relayed signal after compression to a rate which matches the bottleneck channel capacity.
	The channel model for the relayed signal $Z$ can be expressed by its block fading gain, under an oblivious approach
	\begin{equation}
	Z = \sqrt{FPR_{eq}} \cdot X + N,
	\end{equation}
	where $N$ is a unit variance Gaussian noise, and $FPR_{eq}(s,C_b)$ is specified in (\ref{snreqCV}),
	which is directly obtained from the wireless channel model as stated in the introduction.
\end{proof}

\subsection{Unknown Bottleneck Channel Capacity at the Relay}
Consider the case that bottleneck channel capacity dynamics it too fast even for the relay to know the available capacity per codeword. The transmitter and relay know only the capacity distribution as specified in (\ref{avg_capacity_1}). In this case, the relay can employ successive refinement source coding \cite{successiveCover1991} on it input signal $Y$, which has a Gaussian distribution. On the oblivious relaying approach, successive refinement coding is to be performed as function of the capacity distribution function. For the discrete distribution $\{C_i\}_{i=1}^{N}$, the source coding layers code rate will be $R_{sc,1}=C_1$, $R_{sc_,2}=C_2-C_1$, ..., $R_{sc_,N}=C_N-C_{N-1}$. In successive refinement source coding, the refinement layer is encoded using the previous layers as side information. This means that decoding process also includes ordered decoding of the source layers, where target decoded layer is decoded using all previous layers as side information.

For a Gaussian channel it is well known that the Gaussian source is successively refinable \cite{successiveCover1991}, and even some more general sources \cite{Berger2001}. This means that maximization of the expected rate with oblivious relaying the exact same rate can be achieved as specified in (\ref{eq5Mult_BS}). This means that as long as the relay can perform successive refinement source coding matched to backhaul capacity distribution, it does not help and cannot increase expected achievable rate if the relay is informed with the available capacity per codeword.

\section{Numerical Results}
The following section provides some examples of achievable rates with single layer coding and continuous broadcasting, with comparison to the ergodic bound, for the block fading information-bottleneck channel. Both oblivious, and non-oblivious DF approaches are evaluated for the known fixed capacity bottleneck channel capacity. The numerical results are calculated for a Rayleigh fading channel, where $F_{\nu}(u) = 1-\exp(-u)$, with a static bottleneck capacity $C$.
\begin{figure}[]%[width=0.7]
	\centering
	\includegraphics[width=0.8\textwidth]{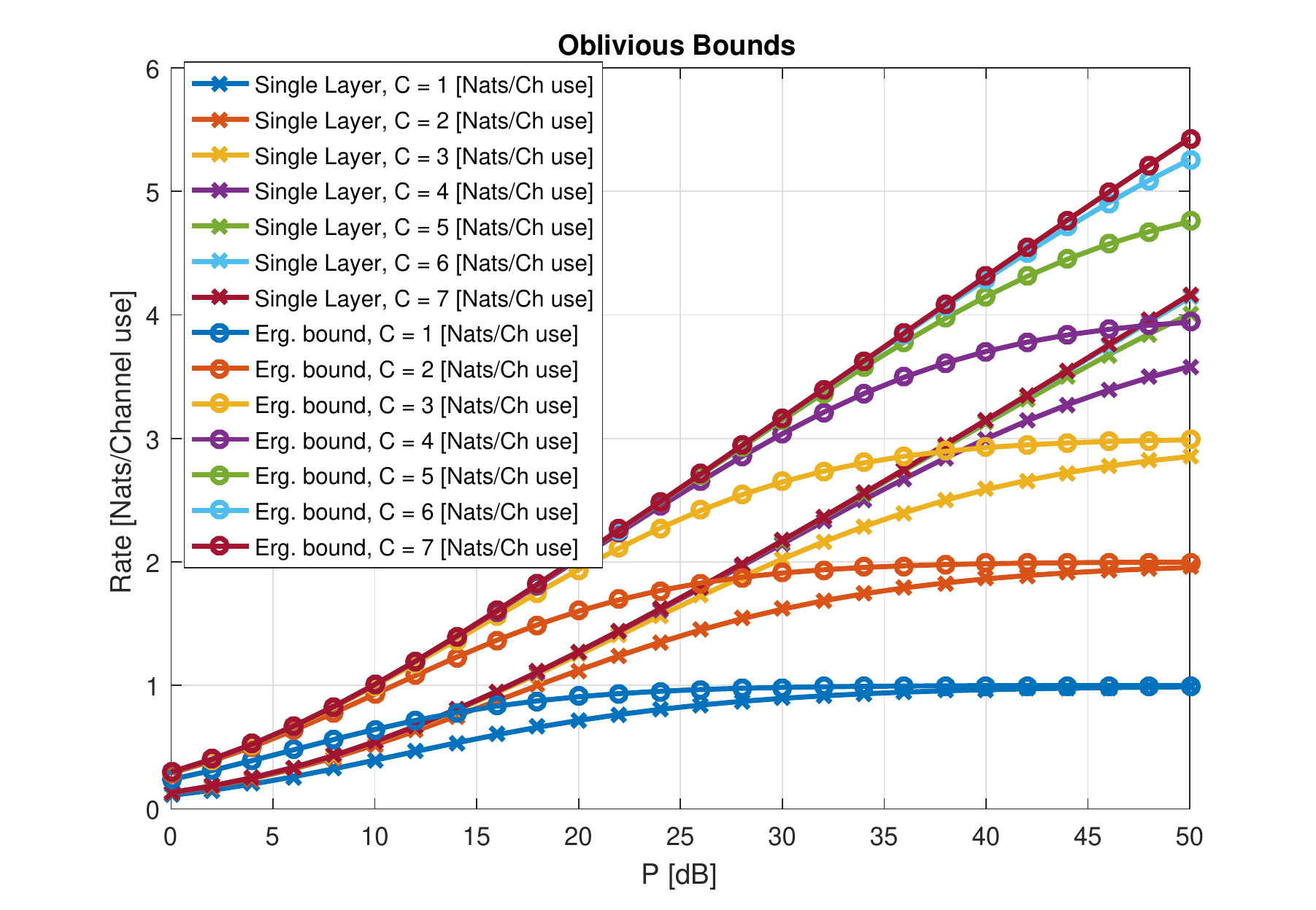}
	\caption{Oblivious approach of single layer coding vs. ergodic capacity, as function of bottleneck channel capacity.}
	\label{fig:Obliv_singlelayer}
\end{figure}
Figure \ref{fig:Obliv_singlelayer} demonstrates the achievable rate with a single layer coding oblivious approach, specified in (\ref{eq1}), compared to the ergodic oblivious bound, as specified in (\ref{erg_Obliv}). The multiple curves correspond to different values of the bottleneck channel capacity $C$. It is clear from the results here that for small $C$ the single layer asymptotically achieves the ergodic bound, while for $C\geq 3$ there is a large gap of the single layer approach to the ergodic bound, which may be narrowed down by using the broadcast approach.
\begin{figure}[]%[width=0.7]
	\centering
	\includegraphics[width=0.8\textwidth]{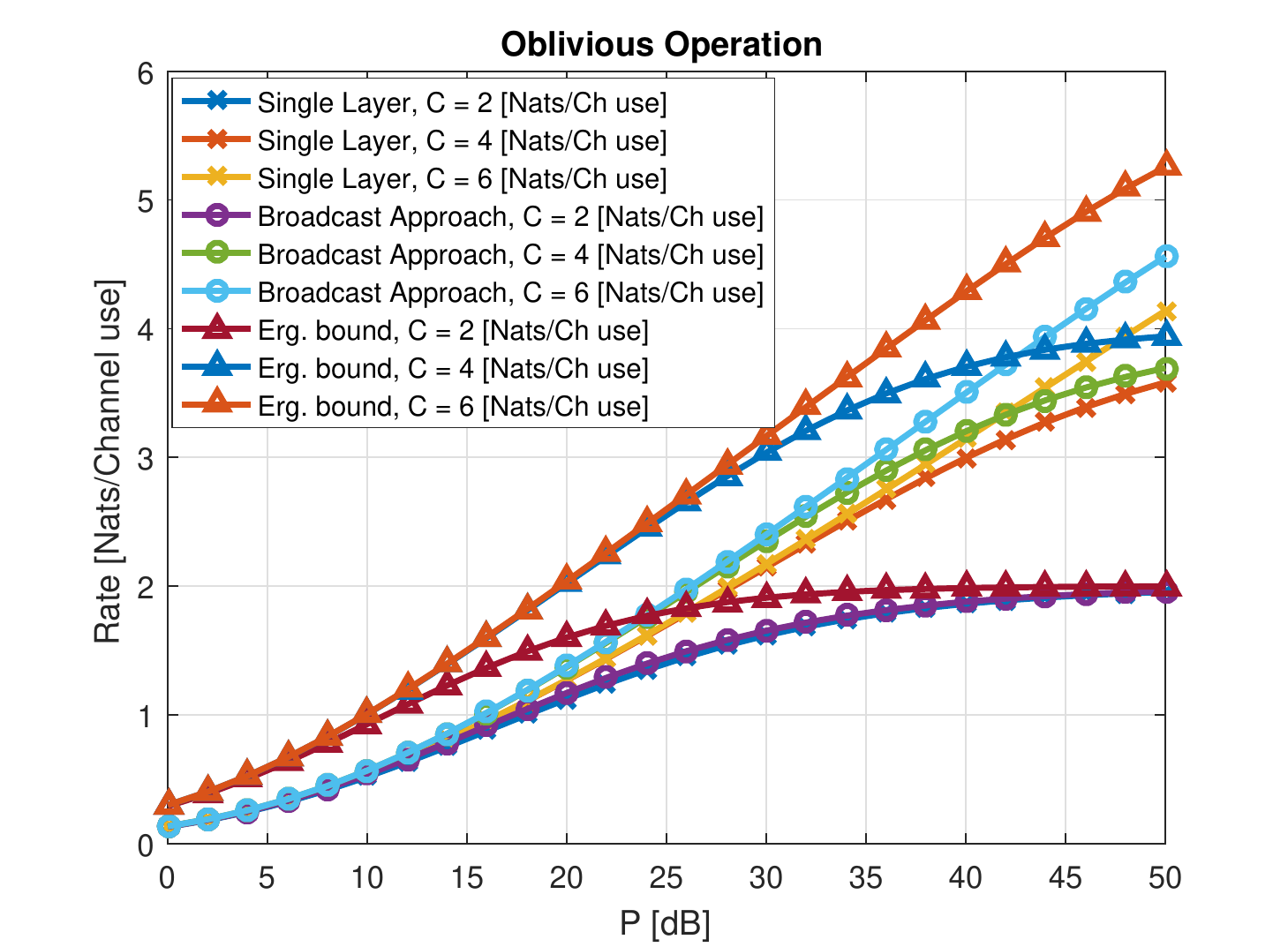}
	\caption{Oblivious approach of single layer coding and broadcast approach compared to the ergodic capacity, as function of bottleneck channel capacity.}
	\label{fig:Obliv}
\end{figure}
Figure \ref{fig:Obliv} demonstrates the achievable rates with a single layer coding oblivious approach, specified in (\ref{eq1}), compared to the oblivious broadcast approach, specified in (\ref{S_5})-(\ref{S_7}) where $F_\nu(u)$ is defined by $\nu=FPR_{eq}$ from (\ref{snreq}), and the ergodic oblivious bound (\ref{erg_Obliv}). The multiple curves correspond to different values of the bottleneck channel capacity $C$. It may be observed from the results that the higher the bottleneck channel capacity $C$, the higher is the oblivious broadcast approach gain compared to the single layer coding approach. 
\begin{figure}[]%[width=0.7]
	\centering
	\includegraphics[width=0.8\textwidth]{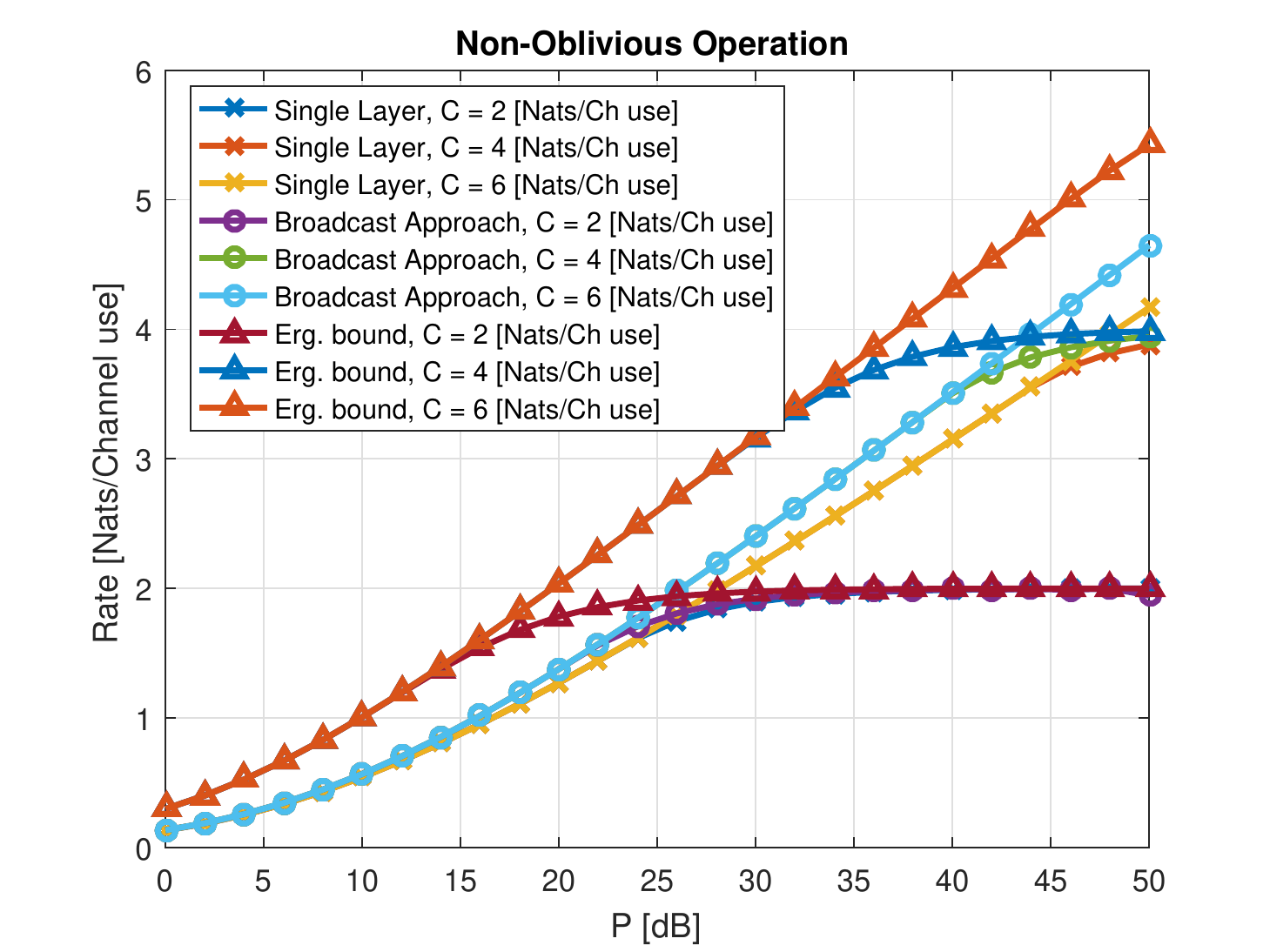}
	\caption{Non-Oblivious single layer coding and broadcast approach compared to the ergodic capacity, as function of bottleneck channel capacity.}
	\label{fig:nonObliv}
\end{figure}
Figure \ref{fig:nonObliv} demonstrates the achievable rates with a single layer coding non-oblivious approach, specified in (\ref{eq1_SL}), compared to the non-oblivious broadcast approach, specified in Proposition \ref{thm:RbsNonObliv}, and the ergodic non-oblivious bound (\ref{eq_erg}). The multiple curves correspond to different values of the bottleneck channel capacity $C$. It may be observed from the results that the higher the bottleneck channel capacity $C$, the higher is the non-oblivious broadcast approach gain compared to the single layer coding non-oblivious approach. 
\begin{figure}[]%[width=0.7]
	\centering
	\includegraphics[width=0.8\textwidth]{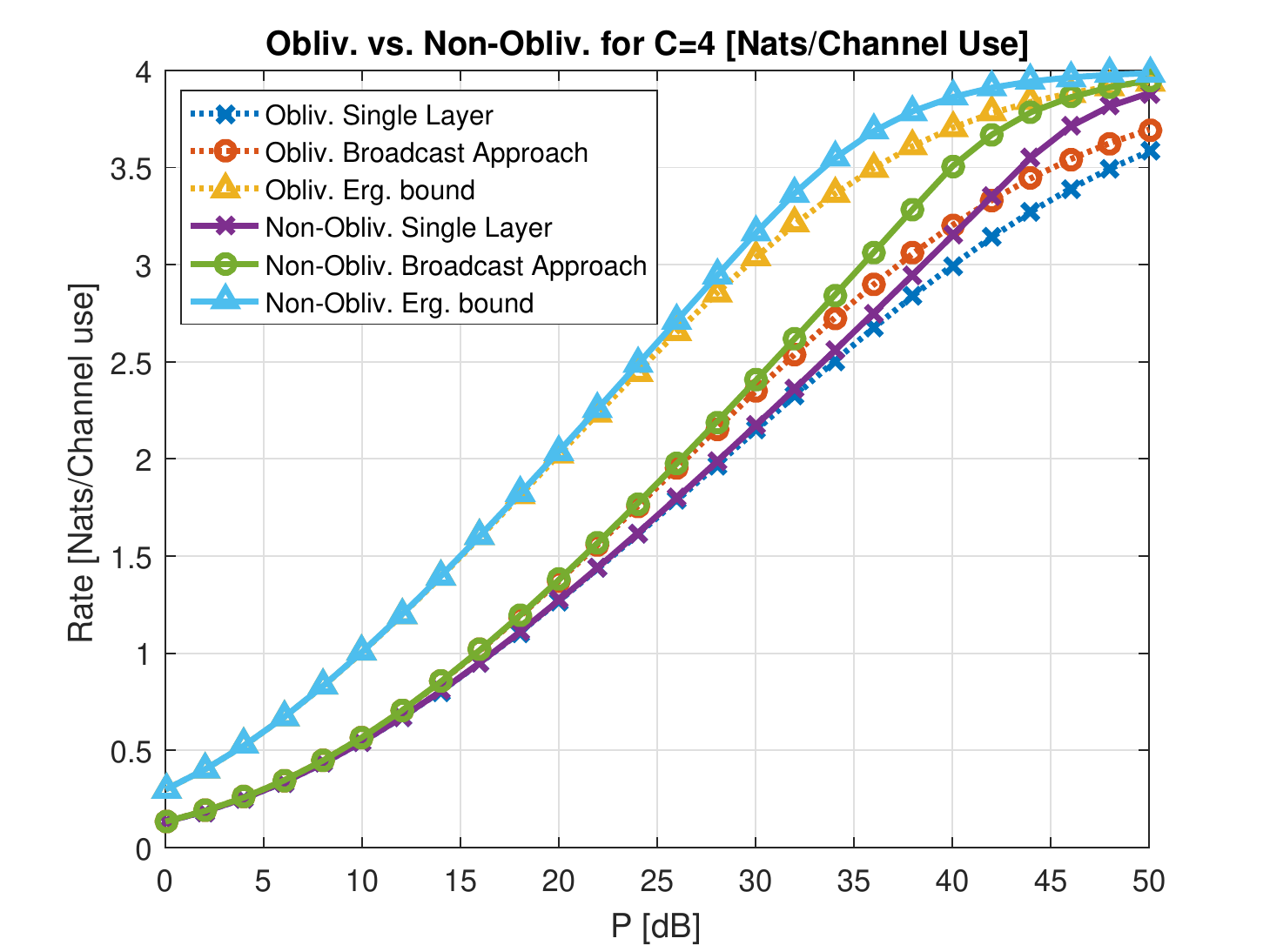}
	\caption{Oblivious vs. Non-Oblivious single layer coding and broadcast approach compared to the ergodic capacity, for bottleneck channel capacity of $C=4$ [Nats/Channel Use].}
	\label{fig:OblivVsNonObliv}
\end{figure}
Figure \ref{fig:OblivVsNonObliv} demonstrates the achievable rates with a non-oblivious approach as compared to an oblivious approach, for a bottleneck channel capacity $C=4~[Nats/Channel use]$. It can be observed here that at high SNR region the gain of the broadcast approach compared to single layer coding is higher with a non-oblivious approach.

\subsection{The Bottleneck Capacity Uncertainty}
In this subsection the impact of uncertainty in the bottleneck channel capacity is evaluated for a two state bottleneck capacity random variable, where ($C_1, C_2$) are the two possible capacity values, corresponding to probabilities ($p_1, (1-p_1)$). The comparison is done for the oblivious approach with $p_1=1/3$, and $C_{avg}=p_1C_1+(1-p_1)C_2$, and $C_{avg}=C$ on all cases, meaning that the average available capacity is equal to the deterministic capacity setting.
The numerical results are calculated for a Rayleigh fading channel, where $F_s(u) = 1-\exp(-u)$.

Figure \ref{fig:OblivFixedAvg_1} demonstrates the achievable rate with a single layer coding oblivious approach, specified in (\ref{eq1}) for a fixed bottleneck capacity $C$, compared to the uncertain bottleneck capacity specified in (\ref{eq4Mult}). The multiple curves correspond to different values of the fixed/average bottleneck channel capacity $C$.

Figure \ref{fig:OblivFixedAvg_1} demonstrates the achievable rate with the oblivious broadcast approach, specified in (\ref{S_5})-(\ref{S_7}) where $F_\nu(u)$ is defined by $\nu=FPR_{eq}$ from (\ref{snreq}) for a fixed bottleneck capacity $C$. This is compared to the uncertain bottleneck capacity specified in (\ref{eq5Mult_BS}). The multiple curves correspond to different values of the fixed/average bottleneck channel capacity $C$.

Figure \ref{fig:OblivFixedAvg_All} compares the single layer with the broadcast approach and ergodic capacity for $C=4$ Nats/Channel use, for a fixed bottleneck capacity and a two state bottleneck capacity. As may be noticed from the numerical results, the penalty of a random bottleneck is mainly on the high SNRs where the achievable average rate is in the range of the bottleneck capacity. 
 
\begin{figure}[h!]%[width=0.7]
	\centering
	\includegraphics[width=0.8\textwidth]{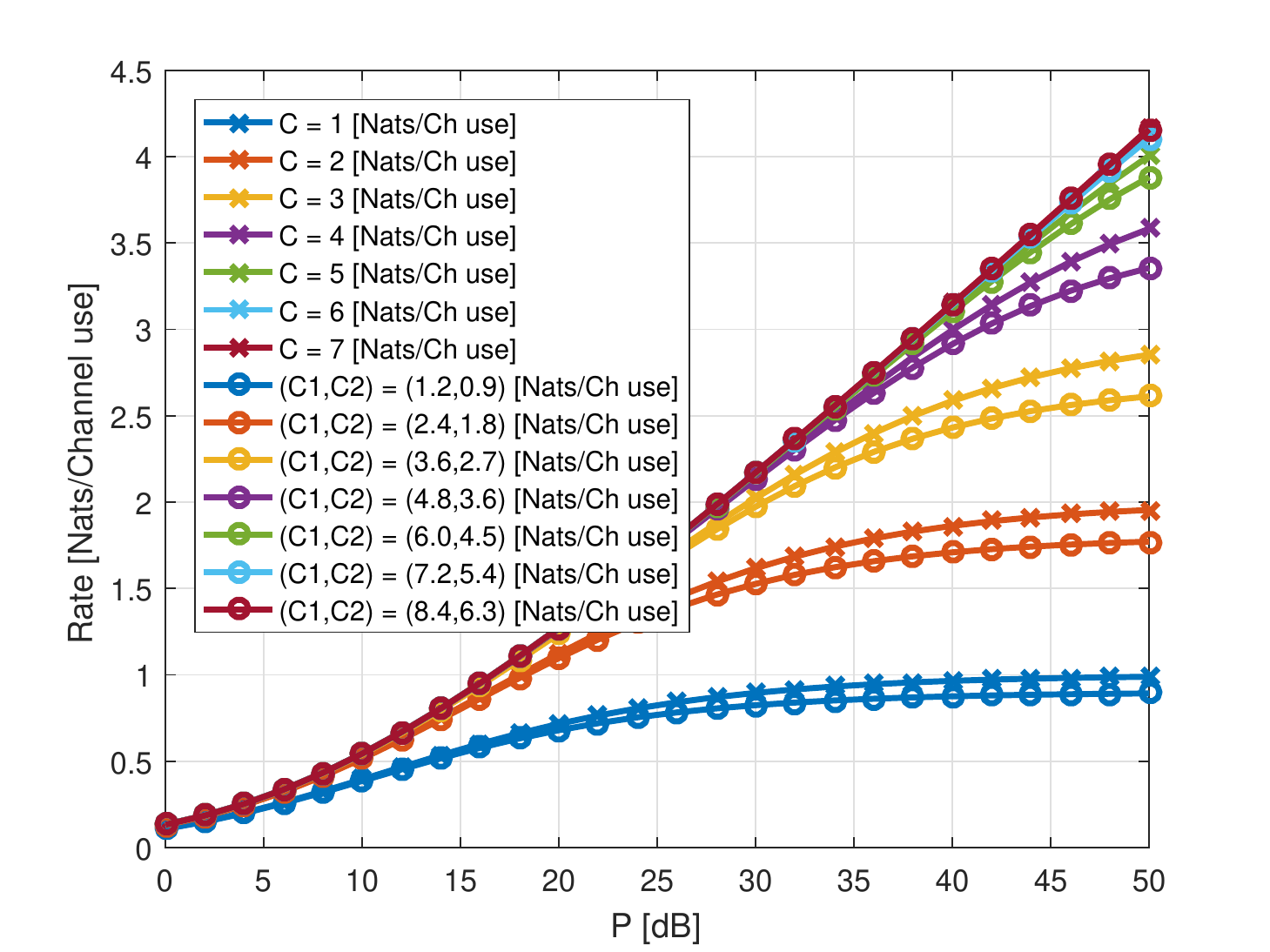}
	\caption{Oblivious single layer coding, comparison of fixed vs. average bottleneck channel capacity $C=\sum_{i=1}^{2}p_{b,i}C_i$, and $p_{b,1}=1/3$, $p_{b,2}=1-p_{b,1}$.}
	\label{fig:OblivFixedAvg_1}
\end{figure}
\begin{figure}[h!]%[width=0.7]
	\centering
	\includegraphics[width=0.8\textwidth]{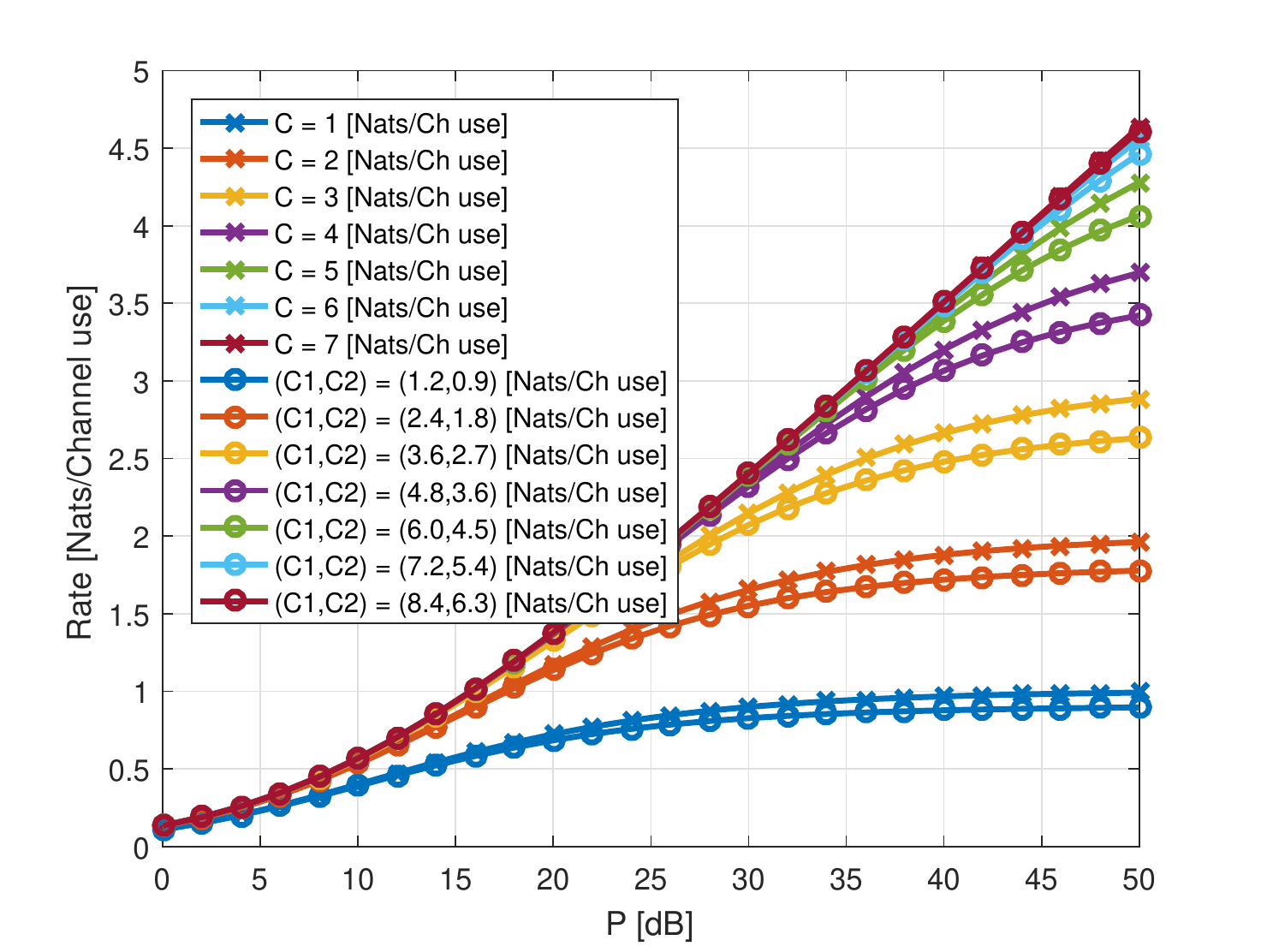}
	\caption{Oblivious Broadcast approach, comparison of fixed vs. average bottleneck channel capacity $C=\sum_{i=1}^{2}p_{b,i}C_i$, and $p_{b,1}=1/3$, $p_{b,2}=1-p_{b,1}$.}
	\label{fig:OblivFixedAvg_BS}
\end{figure}
\begin{figure}[h!]%[width=0.7]
	\centering
	\includegraphics[width=0.8\textwidth]{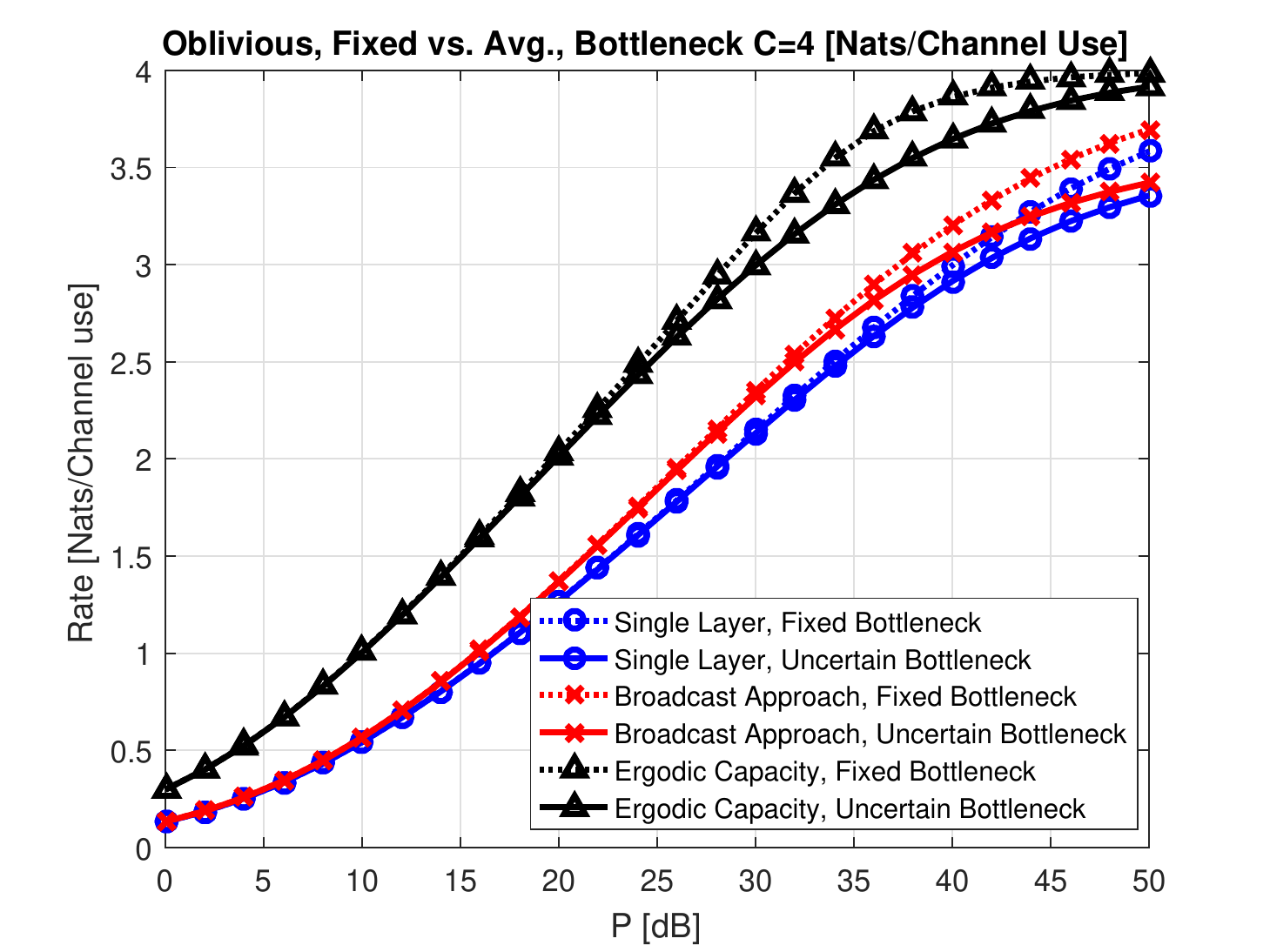}
	\caption{Oblivious comparison of fixed vs. average bottleneck channel capacity ($C=\sum_{i=1}^{2}p_{b,i}C_i = 4$ [Nats/Chan use]). Comparison of single level coding with the broardcast approach and the ergodic capacity.}
	\label{fig:OblivFixedAvg_All}
\end{figure}

\section{Conclusion and Outlook}
This work considers the problem of efficient transmission over the block fading channel with a bottleneck limited channel. Two main approaches are considered, the first is an oblivious approach, where the sampled noisy observations are compressed and transmitted over the bottleneck channel without having any knowledge of the original information codebook. This is compared to a non-oblivious approach where the sampled noisy data is decoded, and whatever is successfully decoded is reliably transmitted over the bottleneck channel. The model is extended for an uncertain bottleneck channel capacity setting, where transmitter is not aware of the available backhaul capacity per transmission, only its distribution. 
In both settings it is possible to analytically describe in closed form expressions, the optimal continuous layering power distribution which maximizes the average achievable rate. Fortunately, it is also possible to define, and solve numerically the joint optimization of the broadcast approach, under backhaul capacity uncertainty. In addition, as the relay can perform successive refinement source coding matched to backhaul capacity distribution, it does not help and cannot increase expected achievable rate if the relay is informed with
the available capacity per codeword.

A possible direction for further research on the information bottleneck channel is to consider a model with two relays, known as the diamond channel. In the oblivious non-fading case the optimal transmission and relay compression, together with joint decompression at the receiver are known and characterized in \cite{sholomoDiamond19}. For the non-oblivious diamond channel only upper bounds \cite{Michaelsholomo19} and achievable rates of the type \cite{Urbanke19} are available. Another possible direction is extending \cite{Karaksik13} to scenarios where the variable backhaul links capacities $\{C_i\}$ are not available at the relay node, but at the destination only. Another possible extension may include adapting the broadcast MIMO approach for the vector bottleneck channel \cite{ShitzSteiner03}, \cite{6875354}. Another interesting extension is the multi-relay setting where each relay is connected to a finite capacity backhaul link, and under this setting the broadcast approach for multi-access channels \cite{Tajer18} becomes very beneficial.

\section{APPENDIX A}\label{Appendix}
This section provides a proof to Proposition \ref{thm:RbsNonObliv}.
\begin{proof}
	The optimal residual power distribution which maximizes the rate has to maximize the average layered transmission rate under a bottleneck channel capacity limitation $C$.
	The optimization problem can be expressed as
	\begin{eqnarray}\label{proof_1}
	R_{bs,DF,avg} = \max\limits_{I(u)} \frac{1}{2} \int\limits_0^{\infty} du
	(1-F_s(u)) \frac{\rho(u)u}{1+I(u)u}, ~~~ s.t. \left( C - \int\limits_0^{\infty} du \frac{\rho(u)u}{1+I(u)u} \right)\geq 0
	\end{eqnarray}
	This constrained variational problem can be expressed with an Euler-Lagrange constrained optimization problem  
	\begin{eqnarray}\label{proof_2}
	\max\limits_{I(u)} \frac{1}{2} \int\limits_0^{\infty} du
	(1-F_s(u)) \frac{\rho(u)u}{1+I(u)u} + \lambda \left( C - \int\limits_0^{\infty} du \frac{\rho(u)u}{1+I(u)u} \right)
	\end{eqnarray}
	where $\lambda\geq 0$ is a scalar Lagrange multiplier. The problem is a standard constrained variational problem with boundary conditions, with a general form representation
	\begin{eqnarray}\label{proof_3}
	\max\limits_{I(u)} \int\limits_0^{\infty} du
	A\left(u, I(u), I'(u)\right)  + \lambda \int\limits_0^{\infty} du B\left(u, I(u), I'(u)\right)
	\end{eqnarray}
	The Euler-Lagrange condition for extremum is \cite{GF63}
	\begin{eqnarray}\label{proof_4}
	A_I - \frac{d}{du}A_{I'} + \lambda \left( B_I - \frac{d}{du}B_{I'} \right) = 0 
	\end{eqnarray}
	where
	\begin{eqnarray}\label{proof_5}
	A\left(u, I(u), I'(u)\right) = \frac{1}{2} 
	(1-F_s(u)) \frac{-I'(u)u}{1+I(u)u}
	\end{eqnarray}
	\begin{eqnarray}\label{proof_6}
	B\left(u, I(u), I'(u)\right) = \frac{I'(u)u}{1+I(u)u} 
	\end{eqnarray}
	by substituting $A(u,I,I')$ and $B(u,I,I')$ in (\ref{proof_2}). The extremum conditions in (\ref{proof_4}) are computed by the paritial derivatives as function of $I$, $I'$ as follows
	\begin{eqnarray}\label{proof_7}
	A_I & = &\frac{(1-F_s(u))u^2I'(u)}{(1+uI(u))^2}\\
	A_{I'} &= &\frac{-u(1-F_s(u))}{(1+uI(u))}\\
	\frac{d}{du}A_{I'} &=& \frac{uf_s(u)(1+uI(u))+ (1-F_s(u))(u^2I'(u)-1)}{(1+uI(u))^2}\\
	B_I & = &\frac{u^2I'(u)}{(1+uI(u))^2}\\
	B_{I'} &= &\frac{-u}{(1+uI(u))}\\
	\frac{d}{du}B_{I'} &=& \frac{u^2I'(u)-1}{(1+uI(u))^2}\label{proof_8}
	\end{eqnarray}
	Next, substitution of expressions (\ref{proof_7})-(\ref{proof_8}) in the extremum condition equation (\ref{proof_4}), and solving for $I_{opt}(u)$ gives
	\begin{eqnarray}\label{proof_9}
	I_{opt}(u) = \frac{1-F_s(u)+\lambda-uf_s(u)}{u^2f_s(u)}
	\end{eqnarray} 
	which requires applying the boundary conditions to get (\ref{prop_2}). The constant $\lambda_{opt}$ is obtained from applying the boundary condition $I(u_1)=0$, to get  (\ref{prop_3}), and the total rate constraint of the bottleneck channel $C$ is applied by 
	\begin{eqnarray}\label{proof_10}
	C = \int\limits_{u_0}^{U_1} du \frac{-uI'_{opt}}{(1+uI_{opt}(u))} 
	\end{eqnarray} 
	where $I_{opt}$ is specified in (\ref{proof_9}). This leads to the result in (\ref{prop_4}), from
	\begin{eqnarray}\label{proof_11}
	\exp(2C) = \frac{u_1^2f_s(u_1)}{ u_0^2f_s(u_0)} 
	\end{eqnarray}	
	and the equation in (\ref{prop_4}) is directly obtained. 
\end{proof}

\section*{Acknowledgment}

\bibliographystyle{IEEEtran}
% argument is your BibTeX string definitions and bibliography database(s)
\bibliography{IEEEabrv,xbib}

\end{document}